\documentclass{article}
\usepackage{spconf,amsmath,graphicx, amssymb, hyperref, amsthm, dsfont, amsfonts}
\usepackage{algorithm,algorithmic}
\usepackage{multirow}
\usepackage{multicol}








\newcommand{\arbitraryScalar}{a}
\newcommand{\arbitraryMat}{\mathbf{A}}
\newcommand{\arbitraryVec}{\mathbf{a}}

\newcommand{\mat}[1]{\mathbf{#1}}
\renewcommand{\vec}[1]{\mathbf{#1}}

\newcommand*{\of}[1]{_{#1}}
\newcommand*{\tran}{^{\mathsf{T}}}
\newcommand*{\herm}{^{\mathsf{H}}}
\renewcommand*{\th}{^\text{th}}

\newcommand*{\inv}{^{-1}}
\newcommand{\norm}[1]{\lVert {#1} \rVert}
\newcommand{\abs}[1]{\left|{#1}\right|}
\newcommand{\normalized}[1]{{#1}\of{\text{norm}}}
\renewcommand{\max}[1]{{\text{max}}\left({#1}\right)}
\newcommand{\hadamard}{\odot}
\newcommand{\permute}{\pi}
\newcommand{\conj}[1]{\bar{#1}}

\newcommand{\generalVariationMetric}{\mu}

\newcommand{\totalVariation}{\sigma}
\newcommand{\totalVariationOf}[1]{\totalVariation(#1)}

\newcommand{\eigval}[1]{\lambda(#1)}
\newcommand{\eigvalof}[2]{\lambda\of{#2}({#1})}
\newcommand{\eigvalvec}[1]{\boldsymbol{\lambda}(#1)}
\newcommand{\eigvalmat}{\mat{\Lambda}}
\newcommand{\eigvec}{\vec{x}}
\newcommand{\eigmat}{\mat{X}}
\newcommand{\eigvalmax}{\max{\abs{\eigvalvec{\adjmat}}}}


\newcommand{\graph}[1]{{\cal G}({#1})}
\newcommand{\adjmat}{\mat{A}}
\newcommand{\gsignal}{s}
\newcommand{\gsignalVec}{\vec{\gsignal}}

\newcommand{\edge}[2]{#1\rightarrow #2}

\newcommand{\IGFT}[1]{\mathbf{GFT}\inv\of{#1}}
\newcommand{\GFT}[1]{\mathbf{GFT}\of{#1}}

\newcommand{\set}[1]{\mathbb {#1}}
\newcommand{\smat}[2]{\mathcal{M}\of{#1}(\set{#2})}
\newcommand{\suni}[2]{\mathcal{U}\of{#1}(\set{#2})}
\newcommand{\sdiag}[2]{\mathcal{D}\of{#1}(\set{#2})}
\newcommand{\spsd}[2]{\mathcal{S}\of{#1}(\set{#2})}

\newcommand{\polarUni}{\mat{Q}}
\newcommand{\polarPsdLeft}{\mat{P}}
\newcommand{\polarPsdRight}{\mat{F}}

\newcommand{\svdLeft}{\mat{U}}
\newcommand{\svdRight}{\mat{V}}
\newcommand{\svdDiag}{\mat{\Sigma}}
\newcommand{\svd}{\svdLeft\svdDiag\svdRight\tran}

\newcommand{\angularFreq}{\omega}

\newcommand{\cociteMat}{\mat{C}\of{\text{out}}}
\newcommand{\bibMat}{\mat{B}\of{\text{in}}}

\newcommand{\gaussian}[2]{\mathcal{N}(#1,#2)}

\newcommand{\equivalent}{\cong}

\makeatletter
\renewcommand*\env@matrix[1][\arraystretch]{%
  \edef\arraystretch{#1}%
  \hskip -\arraycolsep
  \let\@ifnextchar\new@ifnextchar
  \array{*\c@MaxMatrixCols c}}
\makeatother


\newtheorem{definition}{Definition}
\newtheorem{theorem}{Theorem}
\newtheorem{remark}{Remark}[definition]


\title{Frequency analysis and filter design for directed graphs\\with polar decomposition}

\name{Semin Kwak$^*$, Laura Shimabukuro$^*$, and Antonio Ortega}
\address{University of Southern California, Los Angeles, USA}

\begin{document}
\ninept
\maketitle
\def\thefootnote{*}\footnotetext{These authors contributed equally to this work.}\def\thefootnote{\arabic{footnote}}

\begin{abstract}
In this study, we challenge the traditional approach of frequency analysis on directed graphs, which typically relies on a single measure of signal variation such as total variation. We argue that the inherent directionality in directed graphs necessitates a multifaceted analytical approach that incorporates multiple signal variations definitions. Our methodology leverages the polar decomposition to define two distinct variations, each associated with different matrices derived from this decomposition. This approach provides a novel interpretation in the node domain and reveals aspects of graph signals that may be overlooked with a singular measure of variation.
Additionally, we develop graph filters specifically designed to smooth graph signals in accordance with our proposed variations. These filters allow for bypassing costly filtering operations associated with the original graph through effective cascading. We demonstrate the efficacy of our methodology using an M-block cyclic graph example, validating our claims and showcasing the advantages of our multifaceted approach in analyzing signals on directed graphs.
\let\thefootnote\relax\footnotetext{All figures presented in this manuscript are reproducible using the code available at \url{https://github.com/semink/ICASSP2024}.}
\end{abstract}

\begin{keywords}
Graph signal processing, directed graphs, graph Fourier transform, graph filters, polar decomposition. 
\end{keywords}

\section{Introduction}
Graph signal processing (GSP) has gained significant attention by effectively integrating traditional signal processing techniques with graph theory, enabling data processing on irregular domains~\cite{ortega8347162}. 
Many GSP results are well-developed for undirected graphs, leveraging the symmetry of edge connections, which ensures that graph operators (also known as shift operators) of undirected graphs, such as the adjacency matrix and the graph Laplacian, are diagonalizable, leading to the definition of the Graph Fourier Transform (GFT)~\cite{Ortega2020GSP}.

On the other hand, despite a growing need for applications that involve directed graphs 
(digraphs)~\cite{sarcheshmehpour2021federated, giraldo2020semi}, 
the intrinsic asymmetry of their edge connections presents challenges in defining GFTs. First, the basic graph operators become non-diagonalizable for many digraphs \cite{barrufet2020graph,barrufet2021orthogonal}. 
Second, ``ordering" the graph Fourier basis to interpret them as low or high frequencies is difficult because the eigenvalues, which can be used for the ordering, can be complex-valued. As a consequence, a variety of digraph GFTs have been proposed to address these two challenges. 

For example, studies directly using the adjacency matrix as the graph operator have suggested utilizing the Jordan decomposition~\cite{Sandryhaila_2013_2, Domingos_2020} or the singular value decomposition~\cite{Chen_2022} of the adjacency matrix for constructing GFT.
Alternatively, a recent study adds a minimal number of edges so that the corresponding graph operator on the augmented graph becomes diagonalizable~\cite{9325908}. 
Symmetrization of the adjacency matrix has also been studied~\cite{symmetrizations, li2023eigen}.
Finally, other studies define the graph operator as decomposed matrices of adjacency matrix, producing diagonalizable matrices~\cite{MHASKAR2018611, unitaryshift}. 

Although differing in their definition of GFTs, these previous studies coincide in using a single graph operator to measure signal variation, which is the key property in ordering the graph Fourier basis. 
Instead, we propose that analyzing the smoothness of signals on digraphs using only a single variation can be unnecessarily restrictive. Specifically, we suggest considering both (i) signal variation between disconnected nodes (which we call \textit{indirect variation}) and (ii) variation between connected nodes (which we call \textit{in-flow variation}) on a graph that has been minimally augmented to permit circular flows. 
These two variations offer complementary insights about signals on digraphs. 
We utilize the polar decomposition of an adjacency matrix, which results in two different terms:  1) a unitary matrix and 2) a positive semidefinite matrix, and show that the signal variations based on those matrices are related to the in-flow and indirect variations, respectively. 

There have been prior attempts to provide a spectral representation of each term of the polar decomposition~\cite{MHASKAR2018611, unitaryshift}. However, these approaches examined only one term from the polar decomposition and did not develop an interpretation of the resulting frequencies in the node domain. 
Here, we connect the polar decomposition with the in-flow and indirect variations, providing a coherent node-domain interpretation.
A significant advancement presented in this study compared to our preliminary pre-print work~\cite{shimabukuro2023signal} is the formulation of graph filters grounded in polar decomposition.
We demonstrate that the proposed graph filters, based on polynomials of graph operators derived from the polar decomposition, provide an alternative to those based on polynomials of the adjacency matrix. Importantly, this design negates the need for spectral domain filtering and the associated computationally expensive eigendecomposition.
With M-block cyclic graphs~\cite{mblock}, we elucidate the applicability of these filters on graph signals and demonstrate that analyzing graph signals by leveraging polar decomposition, considering both in-flow and indirect variations, provides richer information than analyses based solely on a single variation.

The rest of the paper is structured as follows. In Section~\ref{sec:preliminary}, we provide key notations and our definition of the GFT.
In Section~\ref{sec:polar_decomposition}, we introduce polar decomposition in the context of digraphs. In Section~\ref{sec:node_domain_interpretation}, we analyze graph signal variation via polar decomposition, provide a precise node-domain interpretation, and filter designs based on the polar decomposition. The effectiveness of our proposed approach is subsequently validated in Section~\ref{sec:case_study}.
Finally, Section~\ref{sec:conclusions} offers a conclusion and summarizes our findings.

\section{Preliminaries}
\label{sec:preliminary}
\subsection{Notation}

The set of square matrices of size \( n \) over the field \( \set{F} \) (\( \set{R} \) or \( \set{C} \)) is denoted as \( \smat{n}{\set{F}} \).
The sets of unitary matrices, positive semi-definite matrices, and square diagonal matrices of size \( n \) over \( \set{F} \) are respectively represented as \( \suni{n}{\set{F}} \), \( \spsd{n}{\set{F}} \), and \( \sdiag{n}{\set{F}} \). 
Bold uppercase and lowercase letters denote matrices and vectors, respectively. The \( i\th \) column vector of a matrix \( \arbitraryMat \) is indicated as \( \arbitraryVec\of{i} \).
The \( i\th \) entry of a vector \( \arbitraryVec \) is \( \arbitraryScalar\of{i} \).
The operations \( \arbitraryMat\tran \), \( \conj{\arbitraryMat} \), \( \arbitraryMat\herm \), \(\Re(\arbitraryMat)\) and \(\Im(\arbitraryMat)\) stand for the transpose, conjugate, conjugate-transpose, real component, and imaginary component of the matrix \( \arbitraryMat \), respectively, which also hold for vector and scalar.
We use \( \abs{\arbitraryVec} \) as the element-wise absolute value of the vector \( \arbitraryVec \). The element-wise product of two vectors or matrices is denoted as \( \hadamard \).
The function \( \max{\cdot} \) of a vector (or matrix) returns the value that is the maximum value among all entries of the vector (or matrix).
Eigenvalues of the matrix \( \arbitraryMat \) are  represented in vector form as \( \eigvalvec{\arbitraryMat} \), with the \( i\th \) eigenvalue given by \( \eigvalof{\arbitraryMat}{i} \).

A graph \( \graph{V,E} \) is defined by a set of nodes \( V \) and edges \( E \), with an edge \( e_{ij} \) representing a connection from node \( j \) to node \( i \). In a weighted graph, the edges \( e_{ij} \) have real weights \( a_{ij} \).
The adjacency matrix \( \adjmat\in\smat{n}{R} \) of the graph \( \graph{V,E} \) is a matrix that has the \( i\th \) row and \( j\th \) column entry as \( a_{ij} \), corresponding to the edge weight from node \( j \) to node \( i \).
We say the graph \( \graph{\adjmat} \) as \textit{supports} \( \adjmat \) when its edge weights are represented by the adjacency matrix \( \adjmat \).
A graph signal is represented by a real vector \( \gsignalVec\in\set{R}^n \), wherein the real scalar entry \( \gsignal\of{i} \) is the signal associated with node \( i \).

\subsection{Graph Fourier transform (GFT)}

\begin{definition}[Total variation on 
graphs~\cite{Sandryhaila_2013_2}]
    Let \( \gsignalVec \) be a graph signal on a graph \( \graph{\adjmat} \). The total variation of \( \gsignalVec \) is defined as
    \[
    \totalVariationOf{\gsignalVec}=\norm{\normalized{\adjmat}\gsignalVec-\gsignalVec}\of{1},
    \]
    where \( \normalized{\adjmat}=\frac{1}{\eigvalmax}\adjmat \) and a graph signal \( \gsignalVec\of{1} \) is said to have more energy in the high-frequency components compared to \( \gsignalVec\of{2} \) on \( \graph{\adjmat} \) with respect to total variation if \( \totalVariationOf{\gsignalVec\of{1}}>\totalVariationOf{\gsignalVec\of{2}} \).
\end{definition}


\begin{definition}[Graph Fourier transform on directed graphs~\cite{Sandryhaila_2013_2}]
    Consider a graph \( \graph{\adjmat} \) with diagonalizable adjacency matrix: 
    \[
    \adjmat=\eigmat\eigvalmat\eigmat^{-1}.
    \]
    The set of eigenvectors (columns of \( \eigmat \)) is termed as the graph Fourier basis of \( \graph{\adjmat} \). 
    When the basis is ordered such that for \( i<j \), the eigenvector \( \eigvec_i \) exhibits lower variation than or equal to that of \( \eigvec_j \) for a given variation measure \( \generalVariationMetric \), the ordered matrix \( \eigmat \) is denoted as \( \IGFT{\adjmat|\generalVariationMetric} \) and its inverse \( \eigmat\inv \) as \( \GFT{\adjmat|\generalVariationMetric} \).
\end{definition}


\begin{remark}\label{remark:conj}
    When \( \adjmat\in\smat{n}{R} \), if \( \adjmat\eigvec=\lambda\eigvec \) then \( \adjmat\conj{\eigvec}=\conj{\lambda}\conj{\eigvec} \).
\end{remark}

\begin{remark}
    For simplicity, throughout this paper, we focus on diagonalizable adjacency matrices \( \adjmat \) to discuss the graph Fourier transform (GFT), but our proposed method applies to non-diagonalizable \(\adjmat\). 
\end{remark}

\begin{remark}
    Also, for brevity, we omit the sub-index \( \generalVariationMetric \) (variation measure function) when it is the total variation, i.e., if \( \generalVariationMetric = \totalVariation\) then we write \( \GFT{\adjmat|\totalVariation}= \GFT{\adjmat} \).
\end{remark}

\begin{remark}\label{remark:tv_bound}
    The \( i\th \) \(\ell_1\)-norm normalized eigenvector \( \eigvec\of{i} \) of \( \adjmat \) is such that \( \totalVariationOf{\eigvec\of{i}}=\abs{1-\frac{\eigvalof{\adjmat}{i}}{\eigvalmax}} \), which takes values between 0 and 2 (between 0 and 1 when \(\adjmat\) is positive semidefinite).
\end{remark}

\begin{definition}[Equivalence of graphs w.r.t.~GFT]
    Graphs \( \graph{\adjmat\of{1}} \) and \( \graph{\adjmat\of{2}} \) are said to be  \textbf{equivalent} for a given GFT over \( \generalVariationMetric \) if both have the same ordered set of normalized eigenvectors with the variation metric \( \generalVariationMetric \), i.e., \( \GFT{\adjmat\of{1}|\generalVariationMetric}=\GFT{\adjmat\of{2}|\generalVariationMetric} \).
    Equivalence is denoted as \( \graph{\adjmat\of{1}}\overset{\generalVariationMetric}{\equivalent} \graph{\adjmat\of{2}} \).
\end{definition}

\begin{remark}
    For brevity, we omit \( \generalVariationMetric \) when the measure is the total variation, i.e., \( \graph{\adjmat\of{1}}{\equivalent} \graph{\adjmat\of{2}} \) means \( \graph{\adjmat\of{1}}\overset{\totalVariation}{\equivalent} \graph{\adjmat\of{2}} \).
\end{remark}


\begin{figure}[t]
	\centering
	\includegraphics[width=.95\linewidth]{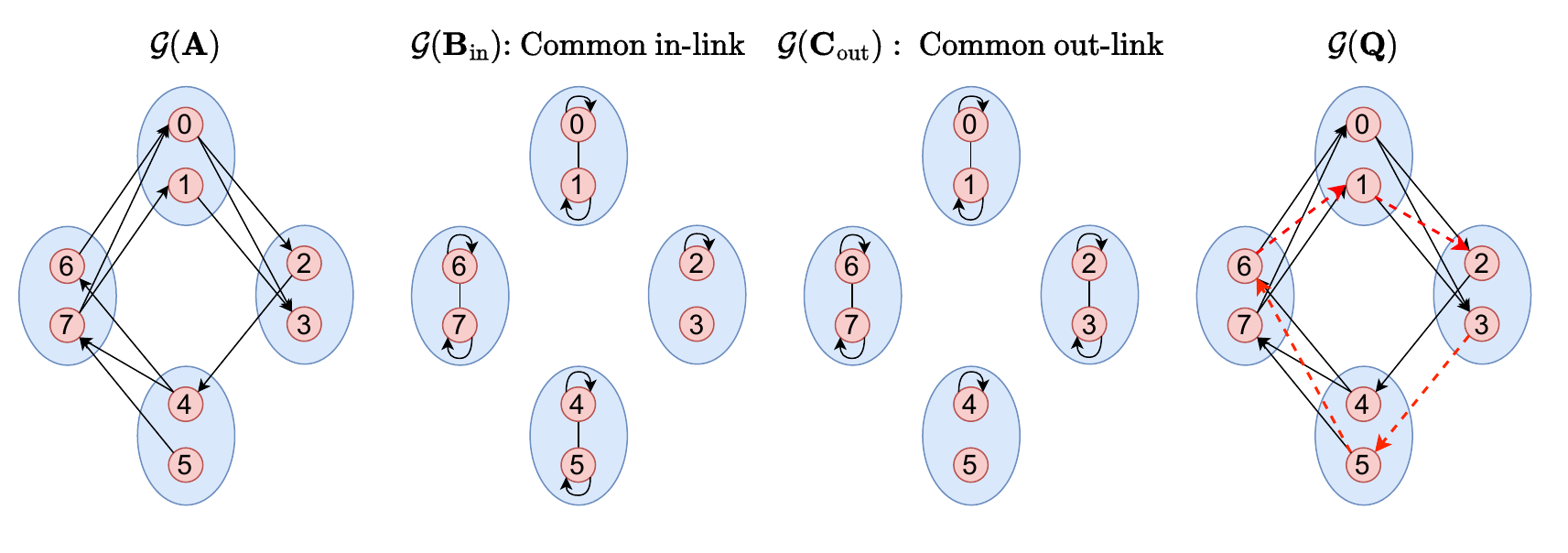}
	\caption{Example of an original digraph \( \graph{\adjmat} \) and 
    symmetrizations \( \graph{\bibMat} \) and \( \graph{\cociteMat} \) where \( \bibMat=\adjmat\adjmat\tran \) and \( \cociteMat = \adjmat\tran\adjmat \), which can be viewed as graphs whose edges connect nodes that share (i) common in-links and (ii) common out-links on the original graph, respectively. 
    The graph that supports the unitary matrix in the polar decomposition (\( \graph{\polarUni} \)) includes added edges (red dashed arrows) that result in Hamiltonian cycles.  \( \graph{\polarUni}\) has real edge weights.}
	\label{fig:symm_examples}
\end{figure}

\section{Polar decomposition}
\label{sec:polar_decomposition}
We propose to use the polar decomposition to analyze graph signals on directed graphs more effectively. 
The polar decomposition of a matrix is analogous to the polar form of a complex value (\( z=re^{j\theta} \), where \( r\ge 0 \) and \( j=\sqrt{-1} \)). It decomposes a matrix through the matrix multiplication of a unitary matrix (analogous to \( e^{j\theta} \)) 
and a positive semi-definite matrix (analogous to \( r \)): 
\begin{equation}
\label{eq:polar-decomp}
\adjmat = \polarUni \polarPsdLeft = \polarPsdRight \polarUni,
\end{equation}
where \( \adjmat\in\smat{n}{R} \), \( \polarUni\in\suni{n}{R} \), \( \polarPsdLeft\in\spsd{n}{R} \), and \( \polarPsdRight\in\spsd{n}{R} \).

The polar decomposition can be obtained from the singular value decomposition. 
For a given matrix \( \adjmat\in\smat{n}{R} \), the singular value decomposition is represented as \( \adjmat=\svdLeft\svdDiag\svdRight\tran \), where \( \svdLeft\in\suni{n}{R} \), \( \svdDiag\in\sdiag{n}{R}\cap\spsd{n}{R} \), and \( \svdRight\in\suni{n}{R} \).
Consequently, we obtain the terms in \eqref{eq:polar-decomp} as \( \polarUni=\svdLeft\svdRight\tran \), \( \polarPsdLeft=\svdRight\svdDiag\svdRight\tran \), and \( \polarPsdRight=\svdLeft\svdDiag\svdLeft\tran \)~\cite{shimabukuro2023signal}.

Given that \( \polarUni \) is a real-valued unitary matrix, all eigenvalues of \( \polarUni \) lie on the unit circle and come in conjugate pairs with associated conjugate eigenvectors.
Consequently, eigenvectors can be ordered by the phase of associated eigenvalues utilizing the variation measure \( \generalVariationMetric\of{\angularFreq}({\eigval{\polarUni})}=\angularFreq\of{\polarUni} \) for the eigenvalue \( \eigval{\polarUni}=e^{j\angularFreq\of{\polarUni}} \) (\( -\pi\le\angularFreq\of{\polarUni}\le\pi \)), therefore, the eigenvector can be associated to the corresponding graph (angular) frequency \( \angularFreq\of{\polarUni} \).
As a result, the graph Fourier transform of \( \graph{\polarUni} \) is defined as \( \GFT{\polarUni|\generalVariationMetric\of{\angularFreq}} \). 

In contrast, eigenvalues of \( \polarPsdLeft \) and \( \polarPsdRight \) are non-negative real values due to the positive semi-definite nature of these matrices. As a result, the total variation of normalized eigenvectors for these matrices is confined within the interval \([0,1]\) (refer to Remark~\ref{remark:tv_bound}).
The graph Fourier transforms of \( \graph{\polarPsdLeft} \) and \( \graph{\polarPsdRight} \) are defined as \( \GFT{\polarPsdLeft} \) and \( \GFT{\polarPsdRight} \), respectively.
\section{Node-domain interpretation}
\label{sec:node_domain_interpretation}

In this section, we clarify how the respective graph signal variations observed in  \( \graph{\polarPsdLeft} \), \( \graph{\polarPsdRight} \) and \( \graph{\polarUni} \) relate to the signal variation of the primary graph \( \graph{\adjmat} \) from a node-domain perspective. 

The graphs \( \graph{\polarPsdLeft} \) and \( \graph{\polarPsdRight} \) are directly related to the symmetrizations of the directed graph \( \graph{\adjmat} \). 
With \( \bibMat = \adjmat\adjmat\tran \) and \( \cociteMat = \adjmat\tran\adjmat \),  \( \graph{\bibMat} \) can be viewed as a graph with edges connecting nodes that share common in-links on the original graph \( \graph{\adjmat} \), while  \( \graph{\cociteMat} \) can be viewed as a graph with edges connecting nodes that share common out-links (see Fig.~\ref{fig:symm_examples})~\cite{symmetrizations}. 
Since \( \adjmat=\svd \), their eigendecompositions are: 
\begin{equation}
\bibMat =  \svdLeft\svdDiag^2\svdLeft\tran\text{ and }\cociteMat = \svdRight\svdDiag^2\svdRight\tran.
\end{equation}
Noting that \( \polarPsdLeft = \svdRight \svdDiag \svdRight^{\tran} \) and \( \polarPsdRight = \svdLeft \svdDiag \svdLeft^{\tran} \), we can deduce that \( \graph{\polarPsdLeft} \) is equivalent to \( \graph{\cociteMat} \) and \( \graph{\polarPsdRight} \) is equivalent to \( \graph{\bibMat} \). This is because they share the same eigenvectors, corresponding to \(\svdLeft\) and \(\svdRight\), where the eigenvectors are ordered from low to high frequency in an order that is the same for \( \svdDiag \) and \( \svdDiag^2 \).
Therefore, given \( \graph{\polarPsdLeft} \equivalent \graph{\cociteMat} \), signal variations on \( \graph{\polarPsdLeft} \) reflect the variations between nodes \( i \) and \( j \) in \( \graph{\adjmat} \) that share a descendant node \( k \) such that both edges \( \edge{i}{k} \) and \( \edge{j}{k} \) exist on the graph \( \graph{\adjmat} \). 
Similarly, the signal variations on \( \graph{\polarPsdRight} \) represent variations between nodes \( i \) and \( j \) in \( \graph{\adjmat} \) that share an ancestor node \( k \), given that both edges \( \edge{k}{i} \) and \( \edge{k}{j} \) exist on the graph \( \graph{\adjmat} \). 
In light of these observations, we term the signal variations on \( \graph{\polarPsdLeft} \) and \( \graph{\polarPsdRight} \) as \textbf{indirect variations} on \( \graph{\adjmat} \).

On the other hand, we denote the signal variation on \( \graph{\polarUni} \) as \textbf{in-flow variation} on \( \graph{\adjmat} \). A fundamental characteristic of \( \polarUni \) is its proximity to \( \adjmat \): it is the unitary matrix closest to \( \adjmat \) with respect to any unitarily invariant norm~\cite{graph_mining}. 
This relationship suggests that \( \polarUni \) is the matrix (graph operator) nearest to \( \adjmat \) such that the signal norm is preserved after multiplication\footnote{The operation \(\polarUni\gsignalVec\) can be interpreted as ``shift" in the signal processing sense. The signal \(\gsignalVec\) can be perfectly reconstructed by the ``reverse shift" \(\polarUni\tran\).}, that is \( \norm{\gsignalVec}_2 = \norm{\polarUni\gsignalVec}_2 \).
Furthermore, in cases where the graph \( \graph\polarUni \) is strongly connected, it is conjectured to be Hamiltonian~\cite{severini2003digraph, gutin2006hamilton}, implying that graph has a cycle containing each vertex exactly once. 
In situations where \( \graph{\polarUni} \) becomes disjoint, its individual sub-graphs retain the property of being supportive of unitary matrices, thereby ensuring that a Hamiltonian cycle persists within every sub-graph. 
This suggests that every node belongs to a minimum of one cycle (including a potential self-loop) within \( \graph{\polarUni} \).
Given this understanding, \( \graph{\polarUni} \) can be viewed as an alternative representation of \( \graph{\adjmat} \) that introduces additional edges to facilitate cycles. 
For instance, as depicted in Fig.~\ref{fig:symm_examples}, the graph \( \graph{\polarUni} \) incorporates four extra edges (illustrated as dashed red arrows), which allow the introduction of Hamiltonian cycles (e.g., 0-2-4-6-1-3-5-7-0).
Consequently, the signal variation on \( \graph{\polarUni} \) captures not only the variation between nodes \( i \) and \( j \) connected by the edge \( \edge{i}{j} \) in \( \graph{\adjmat} \), but also the variation between the disconnected nodes \( j \) and \( k \) in \( \graph{\adjmat} \), provided their linkage potentially forms a cycle. 
In subsequent sections, we will empirically demonstrate the advantages of analyzing graph signals via \( \GFT{\polarUni} \) rather than \( \GFT{\adjmat} \), given its dual properties of close similarity to \( \adjmat \) and its Hamiltonian cycle inclusion.

\begin{figure}[t]
	\centering
	\includegraphics[width=0.69\linewidth]{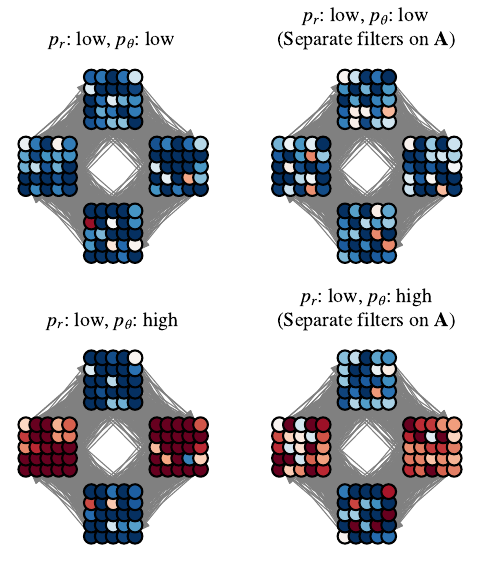}
	\caption{
 A randomly generated graph signal is filtered by the cascaded filter (\(\vec{y} =p_\theta(\polarUni)p_r(\polarPsdLeft)\gsignalVec\)) and the separate filters on \(\adjmat\) (\(\vec{y} =\eigmat p_\theta(\mat{\Theta})p_r(\mat{\Sigma})\eigmat^{-1}\gsignalVec\)) of different polynomial filters \(p_r(\polarPsdLeft)\) and \( p_\theta(\polarUni) \). The colors red, white, and blue represent negative, zero, and positive values. The intensity of the color indicates the magnitude.
 The lowpass filter \(p_r\) smooths signals within the same block while \(p_\theta\) filters block-to-block signals. 
 Similarity is observed between the filtered signals based on the cascaded filter \(p_\theta(\polarUni)p_r(\polarPsdLeft)\) and the filter \( \eigmat p_\theta(\mat{\Theta})p_r(\mat{\Sigma})\eigmat^{-1} \) although the adjacency matrix is not normal.
 }
	\label{fig:filter_output}
    \end{figure}

\subsection{Normal adjacency matrix case}
If \( \adjmat \) is normal, i.e., \( \adjmat\adjmat\tran = \adjmat\tran\adjmat \), a direct relationship can be identified between the magnitude and phase components of its eigenvalues given by \( \eigvalvec{\adjmat} = \abs{\eigvalvec{\adjmat}}\hadamard e^{j\angle{\eigvalvec{\adjmat}}} \), and the eigenvalues of \( \polarPsdLeft \), \( \polarPsdRight \), and \( \polarUni \). Specifically, when \( \adjmat \) is normal, \( \polarPsdLeft=\polarPsdRight \). The magnitude component of each eigenvalue, \( \abs{\eigvalvec{\adjmat}} \), corresponds to an eigenvalue in \( \eigvalvec{\polarPsdLeft} \), implying \( \abs{\eigvalvec{\adjmat}}=\permute\of{1}(\eigvalvec{\polarPsdLeft}) \) for some permutation \( \permute\of{1} \). Similarly, the phase of each eigenvalue, \( e^{j\angle{\eigvalvec{\adjmat}}} \), aligns with an eigenvalue in \( \eigvalvec{\polarUni} \), so \( e^{j\angle{\eigvalvec{\adjmat}}}=\permute\of{2}(\eigvalvec{\polarUni}) \) for another permutation \( \permute\of{2} \).

By splitting \( \eigvalvec{\adjmat} \) into its magnitude and phase, we gain insights into the indirect variation and in-flow variation on \( \graph{\adjmat} \). However, focusing solely on the eigenvalues in \( \eigvalvec{\adjmat} \) from a total variation perspective may lead to misinterpretations: Eigenvectors associated with eigenvalues that have minimal differences in total variation might be mistakenly perceived as exhibiting similar variations. 
For instance, normalized eigenvectors associated with \(\lambda_1/\max{|\boldsymbol{\lambda}|}=e^{j\frac{\pi}{2}}\) and \(\lambda_2/\max{|\boldsymbol{\lambda}|}=(\sqrt{2}-1)e^{j\pi}\) yield the same total variation \(\sqrt{2}\) (as seen in Remark~\ref{remark:tv_bound}), even though they reflect entirely different indirect and in-flow variations.

\begin{figure*}[t]
		\centering
        \includegraphics[width=.95\linewidth]{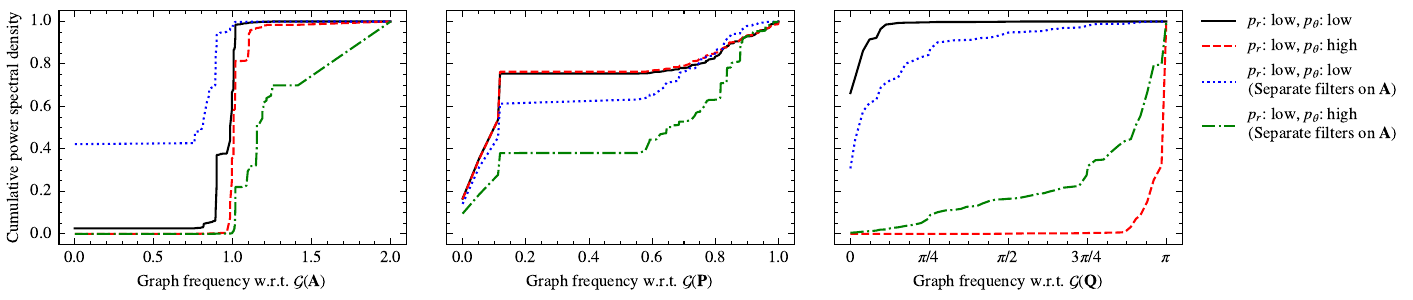}
	\caption{
 The filtered signals reveal different spectra for different variation definitions on \(\adjmat\), \(\polarPsdLeft\), and \(\polarUni\). 
 While all the filtered signals localized around mid frequency on \(\adjmat\), they appear completely different on \( \polarPsdLeft \) and \( \polarUni \) corresponding to each filter type.
 }
	\label{fig:diffusion_exp}
\end{figure*} 

\subsubsection*{Filter design}
For the normal case, there is a direct correspondence of the cascaded polynomial filtering \(p_\theta(\polarUni)p_r(\polarPsdLeft)\) in terms of polynomial filters on the magnitude and phase of eigenvalues of \(\adjmat\). 
\begin{theorem}
Let \(\adjmat\in\smat{n}{R}\) be a real-valued normal matrix with \(\adjmat\adjmat\tran = \adjmat\tran\adjmat\) and eigendecomposition \( \adjmat=\eigmat\eigvalmat\eigmat\herm \), where \(\eigmat\in\suni{n}{C}\) and \(\eigvalmat\in\sdiag{n}{C}\).  When \(\sqrt{{\eigvalmat}\eigvalmat\herm}=\mat{\Sigma}\in\sdiag{n}{R}\cap\spsd{n}{R}\) and \(\eigvalmat=\mat{\Sigma}\mat{\Theta}\), 
for any two polynomial pairs \(p_\theta\) and \(p_r\),
\begin{equation}
\label{eq:PQ-polynomials}p_\theta(\polarUni)p_r(\polarPsdLeft)=\eigmat p_\theta(\mat{\Theta})p_r(\mat{\Sigma})\eigmat\herm,
\end{equation}
where \(p_\theta(\mat{\Theta})p_r(\mat{\Sigma})\) are ``separate'' filters on \(\adjmat\), with \(p_\theta\) targeting the phase and \(p_r\) addressing the magnitude of the eigenvalues.
\end{theorem}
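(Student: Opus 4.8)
The plan is to exploit the fact that normality makes \( \adjmat \), \( \polarUni \), and \( \polarPsdLeft \) \emph{simultaneously} unitarily diagonalizable by the same eigenvector matrix \( \eigmat \); once all three share the eigenbasis \( \eigmat \), every polynomial of \( \polarUni \) or \( \polarPsdLeft \) also keeps \( \eigmat \) as its diagonalizing basis, and the claimed identity falls out by collapsing the inner \( \eigmat\herm\eigmat = \mat{I} \) factors.

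First I would identify the polar factors in terms of the eigendecomposition. Writing \( \eigvalmat = \mat{\Sigma}\mat{\Theta} \) with \( \mat{\Sigma} = \sqrt{\eigvalmat\eigvalmat\herm} \) diagonal and nonnegative and \( \mat{\Theta} \) diagonal with unit-modulus entries, I would verify that \( \polarPsdLeft = \eigmat\mat{\Sigma}\eigmat\herm \) and \( \polarUni = \eigmat\mat{\Theta}\eigmat\herm \). For \( \polarPsdLeft \), the SVD-based construction of the factors in Section~\ref{sec:polar_decomposition} gives \( \polarPsdLeft = (\adjmat\herm\adjmat)^{1/2} \), and since \( \adjmat\herm\adjmat = \eigmat\eigvalmat\herm\eigvalmat\eigmat\herm = \eigmat\mat{\Sigma}^2\eigmat\herm \), taking the unique positive semidefinite square root yields \( \eigmat\mat{\Sigma}\eigmat\herm \). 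For \( \polarUni \), I would check that \( \eigmat\mat{\Theta}\eigmat\herm \) is unitary (because \( \mat{\Theta}\mat{\Theta}\herm = \mat{I} \) and \( \eigmat \) is unitary) and that \( (\eigmat\mat{\Theta}\eigmat\herm)(\eigmat\mat{\Sigma}\eigmat\herm) = \eigmat\mat{\Sigma}\mat{\Theta}\eigmat\herm = \eigmat\eigvalmat\eigmat\herm = \adjmat \), so by uniqueness of the polar decomposition it must coincide with the unitary factor \( \polarUni \).

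Next I would push the polynomials through the shared eigenbasis. Because \( \eigmat \) is unitary, \( \polarUni^k = \eigmat\mat{\Theta}^k\eigmat\herm \) and \( \polarPsdLeft^k = \eigmat\mat{\Sigma}^k\eigmat\herm \) for every \( k \), so by linearity \( p_\theta(\polarUni) = \eigmat\,p_\theta(\mat{\Theta})\,\eigmat\herm \) and \( p_r(\polarPsdLeft) = \eigmat\,p_r(\mat{\Sigma})\,\eigmat\herm \). Multiplying the two and cancelling the central \( \eigmat\herm\eigmat = \mat{I} \) gives \( p_\theta(\polarUni)p_r(\polarPsdLeft) = \eigmat\,p_\theta(\mat{\Theta})p_r(\mat{\Sigma})\,\eigmat\herm \), which is exactly \eqref{eq:PQ-polynomials}. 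I would also note that since \( \mat{\Theta} \) and \( \mat{\Sigma} \) are diagonal they commute, hence \( \polarUni \) and \( \polarPsdLeft \) commute and the order of the cascade is immaterial.

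The main obstacle is Step~1, namely pinning down that the polar factors are diagonalized by the \emph{same} \( \eigmat \) that diagonalizes \( \adjmat \). This is precisely where normality is essential: for a general (non-normal) \( \adjmat \) the factors \( \polarUni \) and \( \polarPsdLeft \) need neither commute nor be codiagonalizable with \( \adjmat \), and the clean polynomial correspondence breaks down. A secondary technical wrinkle is the non-uniqueness of the unitary factor when \( \adjmat \) is singular, since a zero eigenvalue leaves the corresponding diagonal phase of \( \mat{\Theta} \) undetermined; I would resolve this by fixing those phases to \( 1 \) consistently in both \( \mat{\Theta} \) and the construction of \( \polarUni \), so that \( \polarUni = \eigmat\mat{\Theta}\eigmat\herm \) continues to hold as stated.
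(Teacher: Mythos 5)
Your proof is correct, and it lands on the same two identities that drive the paper's argument---\( \polarUni = \eigmat\mat{\Theta}\eigmat\herm \) and \( \polarPsdLeft = \eigmat\mat{\Sigma}\eigmat\herm \)---followed by the identical functional-calculus finish (push the polynomials through the shared eigenbasis and cancel \( \eigmat\herm\eigmat \)). Where you differ is in how you justify identifying the polar factors. The paper stays inside its own operational definition of the polar decomposition: it observes that \( \adjmat = \eigmat\eigvalmat\eigmat\herm = \eigmat\mat{\Sigma}(\eigmat\mat{\Theta}\herm)\herm \) is itself a valid SVD (with \( \svdLeft=\eigmat \), \( \svdRight=\eigmat\mat{\Theta}\herm \)) and then reads off \( \polarUni=\svdLeft\svdRight\herm \) and \( \polarPsdLeft=\svdRight\mat{\Sigma}\svdRight\herm \) from the SVD-to-polar formulas stated in Section~\ref{sec:polar_decomposition}. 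You instead verify the defining properties abstractly---\( \eigmat\mat{\Sigma}\eigmat\herm \) is the unique positive semidefinite square root of \( \adjmat\herm\adjmat \), \( \eigmat\mat{\Theta}\eigmat\herm \) is unitary, and their product equals \( \adjmat \)---and invoke uniqueness of the polar decomposition. Each route buys something: the paper's construction sidesteps uniqueness questions entirely, since it only needs to exhibit one consistent SVD/polar pair, which is exactly how its factors are defined; your uniqueness argument is self-contained but airtight only for invertible \( \adjmat \), and to your credit you spot precisely this wrinkle (the unitary factor is non-unique when \( \adjmat \) is singular, and \( \mat{\Theta} \) is underdetermined on the kernel) and patch it with a consistent phase convention---a subtlety the paper silently absorbs into its constructive choice. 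Both arguments rest on the commutativity of the diagonal factors \( \mat{\Sigma} \) and \( \mat{\Theta} \), which is exactly where normality (a single unitary eigenbasis for \( \adjmat \)) enters; as you note, this is the step that collapses for non-normal \( \adjmat \), consistent with the paper's remark that \eqref{eq:PQ-polynomials} fails in that case.
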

\begin{proof}
The eigendecomposition of \(\adjmat=\eigmat\eigvalmat\eigmat\herm=\eigmat\mat{\Sigma}\mat{\Theta}\eigmat\herm=\eigmat\mat{\Sigma}(\eigmat\mat{\Theta}\herm)\herm=\eigmat\mat{\Sigma}\svdRight\herm\) where \( \svdRight=\eigmat\mat{\Theta}\herm \). Since the multiplication of two unitary matrices produces another unitary matrix, \(\svdRight\) is a unitary matrix. Therefore, \(\adjmat=\eigmat\mat{\Sigma}\svdRight\herm\) is a singular value decomposition (SVD) of the adjacency matrix.

By the relationship between SVD and the polar decomposition, 
\begin{align}
    \polarUni &= \eigmat\svdRight\herm = \eigmat(\eigmat\mat{\Theta}\herm)\herm= \eigmat\mat{\Theta}\eigmat\herm,\\
    \polarPsdLeft &= \svdRight\mat{\Sigma}\svdRight\herm =(\eigmat\mat{\Theta}\herm)\mat{\Sigma}(\eigmat\mat{\Theta}\herm)\herm=\eigmat\mat{\Sigma}\eigmat\herm.
\end{align}
Therefore, 
\begin{align}
p_\theta(\polarUni)p_r(\polarPsdLeft)&=p_\theta(\eigmat\mat{\Theta}\eigmat\herm)p_r(\eigmat\mat{\Sigma}\eigmat\herm)\\
&=\eigmat p_\theta(\mat{\Theta})\eigmat\herm\eigmat p_r(\mat{\Sigma})\eigmat\herm\\
&=\eigmat p_\theta(\mat{\Theta}) p_r(\mat{\Sigma})\eigmat\herm.\qedhere
\end{align}
\end{proof}
A crucial implication of the presented theorem is the feasibility of implementing separate filters targeting the magnitude and phase of the eigenvalues of \( \adjmat \). This is particularly advantageous for large graphs as it obviates the need for the computationally intensive eigendecomposition of \( \adjmat \). Instead, we leverage cascaded filtering with \( \polarUni \) and \( \polarPsdLeft \). In subsequent sections, our experimental results demonstrate that this cascaded filtering approach can also be applied in cases where the adjacency matrix deviates from normality.

\section{Case study: M-block cyclic graph}
\label{sec:case_study}

This section aims to highlight that even in the context of non-normal matrices, where the direct correspondence of \eqref{eq:PQ-polynomials} does not hold, there still exists a significant advantage in analyzing graph signals using \( \graph{\polarPsdLeft} \), \( \graph{\polarPsdRight} \), and \( \graph{\polarUni} \) individually. 

We select an \( M \)-block cyclic graph for our case study~\cite{mblock}.
An \( M \)-block cyclic graph comprises \( M \) distinct blocks of nodes, where edges interlink nodes in adjacent blocks, but there are no connections within the same block. All edge orientations follow a uniform cyclic pattern, either clockwise or counter-clockwise. Fig.~\ref{fig:symm_examples} depicts an \( M \)-block cyclic graph divided into \( M=4 \) blocks. Within this structure, indirect variation is defined by changes between nodes in the same block, while in-flow variation relates to changes between nodes in neighboring blocks. Notably, for balanced \( M \)-block cyclic graphs, where each block contains an equal number of nodes, \( \graph{\polarUni} \) retains the \( M \)-block cyclic layout of the original \( \graph{\adjmat} \) but with altered edge weights.

To investigate signal variations across different variations on \( \graph{\adjmat} \), \( \graph{\polarPsdLeft} \), and \( \graph{\polarUni} \) (omitting \( \graph{\polarPsdRight} \) for conciseness), we design graph signals with distinct smoothness levels as follows.
We begin by generating an independent and identically distributed (iid) graph signal \( \gsignalVec \), with its entries drawn from the standard normal distribution \( \gaussian{0}{1} \). Subsequently, we employ cascaded filters on this random signal and study the resultant filtered signal, \(\vec{y} = p_\theta(\polarUni)p_r(\polarPsdLeft)\gsignalVec\), through diverse variations. For each filter, we selected varying types (lowpass and highpass), and some resultant filtered outputs are illustrated in 
Fig.~\ref{fig:filter_output}. 
This analysis brings forth several insights:
1) The lowpass filter \(p_r\) smoothens signals within a block, reflecting the indirect variation.
2) Filters denoted by \(p_\theta\) influence the smoothness from one block to another in the signal.
3) Applying separate filters on the magnitude and phase of the eigenvalue of \(\adjmat\) produces qualitatively similar signals to those processed by the cascaded filters using identical \(p_r\) and \(p_\theta\) although the equality \eqref{eq:PQ-polynomials} does not hold for non-normal \(\adjmat\).

Figure~\ref{fig:diffusion_exp} illustrates the spectrum of each filtered signal across various variations. For signals measured on \( \adjmat \), all the filtered signals are localized around the mid frequency. However, for \( \graph{\polarPsdLeft} \) and \( \graph{\polarUni} \), their graph frequency spectra localize according to the corresponding filter types. This characteristic is consistent with the understanding that signal variations on \( \graph{\polarPsdLeft} \) are measured between nodes within the same block in the context of \( M \)-block cyclic graphs.
Conversely, due to the inherent Hamiltonian cycles in all sub-graphs of \( \graph{\polarUni} \) and its structure resonating with the \( M \)-block cyclic nature of \( \graph{\adjmat} \), \( \graph{\polarUni} \) features cycles with lengths that are integer multiples of \( M=4 \). As a result, signal variations on \( \graph{\polarUni} \) are measured between nodes in successive blocks, consistent with our earlier observations. It is worth noting that such intra-block connections might not necessarily be present in \( \graph{\adjmat} \). Also, signals filtered with lowpass and highpass filters denoted by \( p_\theta \), which respectively admit angular frequencies approximately \( 0 \) and \( \pi \) (radians per node), correspond to zero and one block-to-block cycles, respectively.

\section{Conclusions}
\label{sec:conclusions}
In this study, we delve into the analysis of signals on directed graphs leveraging polar decomposition. With a clear node-domain interpretation, we elucidate that the positive-semidefinite matrix offers insights into the indirect variations present within the original graph, whereas the unitary matrix underscores in-flow variations. Furthermore, we propose cascaded graph filters based on the polar decomposition and illustrate their equivalency to separable filtering on the magnitude and phase of the eigenvalues of \( \adjmat \). Through a focused case study on an M-block cyclic graph, we demonstrate the efficacy of cascaded filters in smoothing graph signals in terms of indirect and in-flow variations distinctly. Our findings highlight that analyzing these filtered signals in light of the aforementioned variations furnishes a richer perspective than solely probing signal variations from the original graph.

\newpage

\bibliographystyle{IEEEbib}
\bibliography{refs}

\end{document}